\newcommand{\ZZ}{\mathbb{Z}}
\newcommand{\CC}{\mathbb{C}}
\newcommand{\Varphi}{\varphi}      
\newcommand{\Plastic}{\rho}        
\newcommand{\Ker}{\operatorname{Ker}}
\newcommand{\Fcal}{\mathcal{F}}
\newcommand{\seqA}{a_n}
\theoremstyle{plain}
\newtheorem{theorem}{Theorem}[section]   
\newtheorem{proposition}[theorem]{Proposition}
\newtheorem{corollary}[theorem]{Corollary}
\newtheorem{lemma}[theorem]{Lemma}
\theoremstyle{definition}
\newtheorem{definition}[theorem]{Definition}
\theoremstyle{remark}
\newtheorem{remark}[theorem]{Remark}
\title{A Hierarchy of Fibonacci Forbidden-Word Hamiltonians:\\
From the Golden Chain to the Plastic Chain and Aperiodic Order}
\author[1]{Marcelo Maciel Amaral}
\affil[1]{Gauge Freedom, Inc., Los Angeles, CA, USA}
\date{\today}
\begin{document}
\maketitle

\begin{abstract}
We introduce an infinite, scale-aligned hierarchy of one-dimensional, frustration-free Hamiltonians by forbidding the minimal forbidden factors of the Fibonacci word up to length $F_K$, the $K$-th Fibonacci number. The ground-state languages have exponential growth constants $\lambda_K$ that decrease monotonically, starting from the value associated with the ``golden chain'' (approximately 1.618) and progressing toward 1. This process yields a staircase of topological-entropy plateaus that flows to an aperiodic fixed point, also known as the Fibonacci subshift. The first nontrivial rung ($K=4$) is the ``Plastic chain,'' which forbids \texttt{SS} and \texttt{LLL}. We prove its ground-state counts follow a specific four-term linear recurrence relation and provide a closed-form solution governed by the plastic constant $\rho\approx 1.3247$. We propose an energy-entropy scaling where the energy penalty for each new forbidden pattern is proportional to the logarithmic ratio of the growth constants from the previous and current rungs, turning the sequence of projectors into an explicit renormalization-group flow from the initial high-entropy phase to the zero-entropy aperiodic fixed point. Algebraically, exact Temperley-Lieb braiding compatibility holds only at the base rung, $K=3$ (which forbids only \texttt{SS}); higher rungs define constrained aperiodic Hamiltonian codes rather than Temperley-Lieb representations. Small instances realized on a D-Wave quantum annealer match these predictions: $K=3$ is trivial, $K=4$ resolves a unit gap with moderate success, and $K\ge 5$ instances require reverse annealing to exceed $99\%$ success, clarifying reduction penalties and embedding variability.
\end{abstract}

\section{Introduction}

One-dimensional (1D) quantum systems with constrained interactions provide a rich laboratory for exploring exotic physics, from topological phases to novel critical phenomena. A celebrated example is the ``golden chain'' Hamiltonian, which models the dynamics of Fibonacci anyons: non-Abelian quasiparticles sought for topological quantum computing \cite{Feiguin2007,Trebst2008}. The model's defining feature is a local projector that strictly enforces the Fibonacci fusion rule $\tau \times \tau = 1 + \tau$, where $\tau$ represents the anyon and $1$ the trivial vacuum state. In a qubit representation (mapping $1 \to S$, $\tau \to L$), this rule translates to a ``hard'' constraint forbidding adjacent vacuum states, written as `SS' (or `11').

This precise `SS' exclusion rule results in a ground-state manifold whose dimension scales exactly as the Fibonacci number $F_{N+2} \sim \Varphi^N$, where $N$ is the system size and $\Varphi = (1+\sqrt{5})/2$ is the golden ratio. Crucially, this Hilbert space structure allows for a local representation of the Temperley-Lieb (TL) algebra $TL_N(\Varphi)$ \cite{KauffmanLomonaco}, the algebraic engine underpinning the universal braiding statistics of Fibonacci anyons. Our present work builds directly upon a concrete, hardware-validated realization of this base $K=3$ rung: the ``Quasicrystal Inflation Code'' (QIC), which implements the $TL_N(\Varphi)$ braid representation via an exact $8{\times}8$ three-qubit gate, $B_{\text{gate}}$ \cite{AmaralQICBraiding2025}. This validated construction serves as the foundation for the hierarchy explored herein.

The golden chain, however, utilizes only the shortest ($F_3=2$) of an infinite sequence of constraints inherent to 1D aperiodic order. The Fibonacci word $w_\infty$--a canonical example of such order \cite{Lothaire2002,BaakeGrimm2013}--is characterized by an infinite set of ``minimal forbidden factors'' (MFFs), one for each Fibonacci length $F_k \ge F_3$ \cite{MignosiRestivoSciortino2002}. This naturally raises the question: What physics emerges if we enforce more of these intrinsic Fibonacci constraints? How does the system evolve as we progressively drive it from the high-entropy topological liquid phase (with entropy per site $h = \log \Varphi$) towards the zero-entropy, perfectly ordered Fibonacci subshift ($h=0$)?

In this paper, we answer this question by constructing an infinite, scale-aligned hierarchy of 1D frustration-free Hamiltonians, $H_K$. Each Hamiltonian $H_K$ is built by penalizing the set $\Fcal_K$ of all MFFs of the Fibonacci word up to length $F_K$. We demonstrate that this hierarchy defines a ``staircase'' of entropy plateaux. The topological entropy per site, $h_K = \log \lambda_K$, is governed by the ground-state growth constant $\lambda_K$, which we show decreases monotonically from $h_3 = \log \Varphi$ toward $h_\infty = 0$ ($\lambda_\infty = 1$). As the first new and nontrivial rung ($K=4$), we provide a complete analysis of the ``Plastic chain'' (forbidding `SS' and `LLL'), proving its ground-state count $a_n$ is governed by the recurrence $a_n=a_{n-1}+a_{n-2}-a_{n-4}$, with asymptotic growth $\lambda_4 = \Plastic \approx 1.3247$, the Plastic ratio\footnote{The original term `Plastic Number' (\textit{het plastische getal}), coined by Dom Hans van der Laan in 1928, refers to architectural theories of 3D proportion and plasticity of form, and is unrelated to the modern synthetic material \cite{Stewart1996}.} (real root of $x^3-x-1=0$). We provide a constructive generator for the MFFs, an exact formula for $a_n$, and an explanation for a curious empirical rounding identity.

From a physics perspective, we propose an energy-entropy scale setting, $J_k \propto \log(\lambda_{k-1}/\lambda_k)$, interpreting the sequence $H_K$ as an explicit renormalization group (RG) flow. We show that while exact $TL_N(\Varphi)$ braiding is broken for $K \ge 4$, the framework yields two new applications: (A) using higher-rung penalties as a soft bias to stabilize the $K=3$ braid model, and (B) defining new hard-constrained aperiodic Hamiltonian codes. Finally, we validate the physical realizability by implementing small instances on a D-Wave quantum annealer, demonstrating that reverse annealing robustly finds the ground states of higher ($K\ge 5$) rungs where standard annealing fails.

Our previously defined entropy $h_K$ is a topological (configurational) entropy of the ground-state subshift $\Sigma_K$ defined by avoided patterns; it measures the exponential growth of valid words. This is distinct from the multifractal spectra of eigenstates in
quasiperiodic hopping models, where one studies singular-continuous energy spectra and critical wavefunctions via $D_q$ and $f(\alpha)$ \cite{KohmotoKadanoffTang1983,FujiwaraKohmotoTokihiro1989}. We therefore do not perform multifractal analysis of $H_K$ itself (its spectrum is discrete by construction).

This paper is organized as follows.
Section \ref{sec:overview} introduces the Hamiltonian hierarchy $H_K$ and the energy-entropy scaling $J_k$.
Section \ref{sec:mff-generator} provides a constructive boundary-flip generator for the Fibonacci minimal forbidden factors.
Section. \ref{sec:comb_hierarchy} develops the combinatorics, including the $K=4$ ``Plastic chain'' with recurrence $a_n=a_{n-1}+a_{n-2}-a_{n-4}$ and its closed form (Section \ref{sec:the-plastic-chain}), and automata-based counting for general $K$ (Section \ref{sec:automata-counting}).
Section \ref{sec:rel-to-qic-braid} presents two operating modes—(A) TL-preserving braiding with soft higher-rung penalties and (B) constrained aperiodic Hamiltonian codes—and explains the loss of three-site TL structure for $K\ge 4$.
Section \ref{sec:quantum-annealing-experiments} reports D-Wave experiments and reverse-annealing performance.
We conclude and summarize reproducibility artifacts in Section \ref{sec:conclusion}.

\section{Overview and Main Construction}
\label{sec:overview}

Let $w_\infty\in\{0,1\}^{\ZZ_{\ge0}}$ be the Fibonacci (Sturmian) word, the fixed point of $0\mapsto01$, $1\mapsto0$ \cite{Lothaire2002}. Its factor complexity is minimal, $p(n)=n+1$. The MFFs of $w_\infty$ are completely characterized: there is exactly one MFF of length $F_k$ for each Fibonacci number $F_k\ge2$, and none at other lengths \cite{MignosiRestivoSciortino2002}. Mapping $0\!\to\!L$, $1\!\to\!S$, the first MFFs are
\[
M_{F_3}=\texttt{SS},\quad
M_{F_4}=\texttt{LLL},\quad
M_{F_5}=\texttt{SLSLS},\quad
M_{F_6}=\texttt{LLSLLSLL},\ \ldots
\]
(This sequence, which corresponds to the standard MFFs of $w_\infty$ under our $0\to L, 1\to S$ mapping, can be algorithmically generated as shown in Proposition \ref{prop:flip}.)
\paragraph{Hamiltonian hierarchy.}
Fix $K\ge3$ and let $\Fcal_K=\{M_{F_3},\dots,M_{F_K}\}$. Define the local, frustration-free parent Hamiltonian on $N$ qubits
\begin{equation}
H_K \;=\; \sum_{M\in\Fcal_K} J_M \sum_{i=1}^{N-|M|+1} \Pi^{(M)}_{i:i+|M|-1},
\qquad
\Pi^{(M)}_{i:i+|M|-1}\;=\;\bigotimes_{j=0}^{|M|-1}\tfrac12\!\big(I+(-1)^{M_j}Z_{i+j}\big),
\label{eq:HK}
\end{equation}
with $J_M>0$ and $M_j\in\{0,1\}$ the $j$-th bit of $M$. By construction,
\[
\Ker H_K\;=\;\{ \text{binary strings of length }N\text{ avoiding all }M\in\Fcal_K\}.
\]
Let $D_K(N)$ be the cardinality of $\Ker H_K$ on length $N$. Using the avoidance DFA derived from Aho-Corasick (Definition~\ref{def:AC}, Lemma~\ref{lem:AC-avoid}),
one has $D_K(N)=\Theta(\lambda_K^N)$, where $\lambda_K>1$ is the Perron root of its adjacency
\cite{AhoCorasick1975,GuibasOdlyzko1981}. Numerically, the first rungs are:
\begin{center}\small
\begin{tabular}{c|c|l|c}
\toprule
$K$ & $\max|M|$ & Forbidden set $\Fcal_K$ & $\lambda_K$ \\
\midrule
3 & $F_3{=}2$ & \texttt{SS} & $1.618034\ (\Varphi)$ \\
4 & $F_4{=}3$ & \texttt{SS, LLL} & $1.324718\ (\Plastic)$ \\
5 & $F_5{=}5$ & \texttt{SS, LLL, SLSLS} & $1.193859$ \\
6 & $F_6{=}8$ & \texttt{SS, LLL, SLSLS, LLSLLSLL} & $1.114798$ \\
\bottomrule
\end{tabular}
\end{center}

\begin{proposition}[Monotone growth-rate staircase]\label{prop:monotone}
For all $K\ge 4$ and all $N\ge 1$ one has $D_K(N)\le D_{K-1}(N)$, hence
$\lambda_K\le \lambda_{K-1}$. Moreover, the inequality is strict
($\lambda_K<\lambda_{K-1}$) whenever the newly added MFF ($M_{F_K}$) is not
eventually redundant (i.e., it eliminates at least one valid word for
infinitely many $N$).
\end{proposition}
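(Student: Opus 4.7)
The weak inequality $D_K(N)\le D_{K-1}(N)$ is immediate: since $\Fcal_{K-1}\subsetneq\Fcal_K$, every binary string that avoids all factors in $\Fcal_K$ \emph{a fortiori} avoids all factors in $\Fcal_{K-1}$, so $\Ker H_K\subseteq\Ker H_{K-1}$, and passing to the exponential growth rate yields $\lambda_K\le\lambda_{K-1}$.

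For the strict inequality my plan is to reduce to a Perron--Frobenius comparison on a common state space. Starting from the Aho--Corasick DFA $\mathcal{D}_{K-1}$ for $\Fcal_{K-1}$, I would refine each of its states by additionally recording the longest suffix of the processed word that is a prefix of the newly added MFF $M_{F_K}$. The refined DFA $\hat{\mathcal{D}}$ accepts exactly the same language $L(\Sigma_{K-1})$, so its transfer matrix $\hat A$ still has spectral radius $\lambda_{K-1}$. The AC DFA for $\Fcal_K$ is then obtained from $\hat{\mathcal{D}}$ by deleting precisely the transitions that would complete an occurrence of $M_{F_K}$, giving a transfer matrix $A_K\le\hat A$ entry-wise; the set of deleted transitions is nonempty iff $M_{F_K}$ occurs as a factor of some word in $L(\Sigma_{K-1})$, which is the non-redundancy hypothesis.

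Next I would pick a top strongly connected component $\hat C$ of $\hat{\mathcal{D}}$ on which the spectral radius is attained, $\rho(\hat A|_{\hat C})=\lambda_{K-1}$, and argue that non-redundancy forces at least one deleted transition to lie inside $\hat C$. Granted that, the classical Perron--Frobenius strict-comparison lemma---if $B\le A$ with $B\ne A$ and $A$ nonnegative and irreducible, then $\rho(B)<\rho(A)$---applied to $A_K|_{\hat C}\le\hat A|_{\hat C}$ gives $\rho(A_K|_{\hat C})<\lambda_{K-1}$. Since $\lambda_K$ is the maximum Perron root over the SCCs of the $\Fcal_K$-DFA, each such SCC sits inside some SCC of $\hat{\mathcal{D}}$ and has strictly smaller spectral radius whenever it meets $\hat C$ (while all other SCCs of $\hat{\mathcal{D}}$ already have spectral radius below $\lambda_{K-1}$), so $\lambda_K<\lambda_{K-1}$.

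The main obstacle I anticipate is the step ``non-redundancy places a deleted transition inside $\hat C$,'' which really asks for irreducibility of the SFT $\Sigma_{K-1}$: one wants every occurrence of $M_{F_K}$ anywhere in $L(\Sigma_{K-1})$ to be representable by a path through $\hat C$. Morally this is clear, since any two $\Sigma_{K-1}$-allowed words can be bridged by a sufficiently long factor of the Fibonacci word $w_\infty$, which lies in every $\Sigma_K$ and is minimal; but a fully rigorous argument either invokes a short structural lemma about MFF-avoidance DFAs of Sturmian origin, or does a per-$K$ verification on the trim AC DFA. If uniqueness of $\hat C$ is granted, the argument closes with essentially no further work.
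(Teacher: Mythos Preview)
Your treatment of the weak inequality is identical to the paper's: both note the nesting $\mathcal L_K(N)\subseteq\mathcal L_{K-1}(N)$ from $\Fcal_{K-1}\subset\Fcal_K$, deduce $D_K(N)\le D_{K-1}(N)$, and pass to $N$th roots (the paper mentions Fekete/Perron--Frobenius as the mechanism for the limit).

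For the strict inequality you go well beyond the paper. The paper's proof is a single sentence: ``If the new MFF $M_{F_K}$ excludes at least one word that was present in $\mathcal L_{K-1}$ for infinitely many $N$, this forces a strict inequality of the exponential growth rates.'' No refined DFA, no entrywise matrix comparison, no SCC analysis---the implication is simply asserted. Your route via a common refined automaton $\hat{\mathcal D}$ and the strict Perron--Frobenius comparison $A_K\le\hat A$, $A_K\ne\hat A$ on an irreducible block is the honest way to justify that sentence, and you have correctly isolated the one genuine obligation the paper glosses over: the deleted transitions must land in the top SCC, which amounts to irreducibility (mixing) of $\Sigma_{K-1}$. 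Without that, ``eliminates a word for infinitely many $N$'' does not by itself force $\lambda_K<\lambda_{K-1}$, so the paper's one-liner is really a sketch. Your bridging observation---that any two $\Sigma_{K-1}$-allowed blocks can be glued by a long factor of $w_\infty$, which lies in every $\Sigma_K$---is exactly the structural fact that closes the gap and is easy to make rigorous (or to check directly on the trim AC DFA for each $K$). In short: same weak-inequality argument; for strictness, the paper hand-waves while you supply the Perron--Frobenius machinery and correctly flag the only missing lemma.
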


\begin{proof}
By construction $\Ker H_K$ is obtained by forbidding a superset of patterns
relative to $\Ker H_{K-1}$. The set of valid words (the language) at length $N$
is nested, $\mathcal L_K(N)\subseteq \mathcal L_{K-1}(N)$, hence
$D_K(N)\le D_{K-1}(N)$. Taking $N$th roots and the limit $N\to\infty$ gives
$\lambda_K\le \lambda_{K-1}$ (e.g., via subadditivity/Fekete's lemma or from the
Perron-Frobenius theorem for the DFA adjacency matrix). If the new MFF
$M_{F_K}$ excludes at least one word that was present in $\mathcal L_{K-1}$
for infinitely many $N$, this forces a strict inequality of the exponential
growth rates.
\end{proof}

\begin{remark}
Each $\Ker H_K$ is a subshift of finite type (SFT) obtained from the
Aho-Corasick automaton (Definition~\ref{def:AC}) by pruning terminal states (see
also Lemma~\ref{lem:AC-avoid}). The sequence of SFTs is nested,
$\Sigma_{K}\subset\Sigma_{K-1}\subset\cdots\subset\Sigma_3$, and
$\lambda_K=\exp(h_{\mathrm{top}}(\Sigma_K))$ gives the topological-entropy
staircase $h_K=\log\lambda_K\downarrow 0$; see, e.g., \cite{LindMarcus1995,AhoCorasick1975}.
\end{remark}

As formalized in Proposition \ref{prop:monotone}, the growth constants $\lambda_K = \exp(h_K)$ decrease monotonically with $K$, quantifying a staircase of topological entropy reductions $h_K \downarrow 0$. The limiting system ($K\to\infty$) corresponds to the aperiodic Fibonacci subshift, which has zero entropy ($\lambda_\infty=1$) but remains dynamically complex as a minimal system with linear factor complexity \cite{Lothaire2002}.

\paragraph{Energy-entropy scale setting (RG view).}
Let $h_K=\log\lambda_K$ be the per-site configurational entropy of $\Ker H_K$. A simple and physically transparent choice for the energy scales is
\begin{equation}
J_{M_{F_k}} \;\propto\; \Delta h_k \;=\; \log\!\left(\frac{\lambda_{k-1}}{\lambda_k}\right) \quad \text{for } k=4,\dots,K,
\label{eq:scale}
\end{equation}
with $J_{M_{F_3}}$ setting the base scale. This aligns the energy penalty $J_{M_{F_k}}$ with the entropy gap it closes. As the scales $J_{M_{F_k}}$ are dialed on sequentially, the system flows from the high-entropy golden-chain phase to the aperiodic fixed point.

\section{Fibonacci MFFs and a ``Boundary–Flip'' Generator}
\label{sec:mff-generator}

Classically one proves the ``one MFF at each $F_k$'' fact via bispecial factors of $w_\infty$ \cite{Lothaire2002,MignosiRestivoSciortino2002}. For constructive use we provide an elementary generator.

\begin{proposition}[Boundary-flip recursion; constructive]\label{prop:flip}
Let $M_2=\texttt{SS}$ and $M_3=\texttt{LLL}$. For $n\ge4$, set $W=M_{n-1}M_{n-2}$ and flip the three boundary letters
$W[1]$, $W[|M_{n-1}|]$, $W[|M_{n-1}|{+}1]$ (i.e.\ $S\leftrightarrow L$) to obtain $M_n$.
Then $|M_n|=F_{n+1}$ (using $F_1=1, F_2=1, \dots$), $M_n\not\sqsubset w_\infty$, and every proper factor of $M_n$ occurs in $w_\infty$, for all $n\ge2$.
\end{proposition}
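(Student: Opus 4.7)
My plan is to proceed by strong induction on $n\ge 2$, using the theorem of Mignosi-Restivo-Sciortino \cite{MignosiRestivoSciortino2002} that $w_\infty$ admits a unique minimal forbidden factor of length $F_k$ for each $k\ge 3$ and none at other lengths. The length claim is immediate: $|W|=|M_{n-1}|+|M_{n-2}|=F_n+F_{n-1}=F_{n+1}$, and the three flips preserve length, which is consistent with the base cases $|M_2|=F_3=2$ and $|M_3|=F_4=3$. It then suffices to verify the two defining properties of an MFF: (a) $M_n\not\sqsubset w_\infty$, and (b) every proper factor of $M_n$ is a factor of $w_\infty$. By the uniqueness half of Mignosi-Restivo-Sciortino, $M_n$ will then coincide with the unique MFF of length $F_{n+1}$, justifying the identification $M_n \equiv M_{F_{n+1}}$.

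A useful preliminary lemma---provable by a side induction running in parallel with the main one---is that the first and last letters of $M_k$ always agree and alternate with $k$: both boundary letters are $\texttt{S}$ when $k$ is even and $\texttt{L}$ when $k$ is odd (visible in $\texttt{SS}, \texttt{LLL}, \texttt{SLSLS}, \texttt{LLSLLSLL}$). Granting this, the role of the three flips becomes transparent: the flip at position $1$ destroys the prefix copy of $M_{n-1}$ inside $W$; the flip at position $|M_{n-1}|+1$ destroys the suffix copy of $M_{n-2}$; and the flip at position $|M_{n-1}|$ is exactly what is needed so that the two flipped blocks paste together without reintroducing one of the earlier MFFs across the junction (a direct check for the small forbidden patterns $\texttt{SS}$ and $\texttt{LLL}$, and inductively for the longer ones). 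To prove (a), I would assume $M_n \sqsubset w_\infty$ and invert the flips along the embedding: because each flip was chosen to toggle a forbidden adjacency to an allowed one, inverting it forces a shorter MFF $M_k$ ($k<n$) to embed into a window of $w_\infty$ of the corresponding Fibonacci length, contradicting the inductive hypothesis.

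For property (b), every proper factor of $M_n$ has length $\le F_{n+1}-1$, so by Mignosi-Restivo-Sciortino it can fail to appear in $w_\infty$ only by containing one of the shorter MFFs $M_2,\dots,M_{n-1}$. I would partition the proper factors of $M_n$ into three classes---those entirely inside the flipped $M_{n-1}$-block, those entirely inside the flipped $M_{n-2}$-block, and those straddling the junction---and dispatch the first two via the induction hypothesis together with the observation that the flips only touch boundary positions. The main obstacle is the straddling case, where one must rule out that joining the flipped tail of $M_{n-1}$ to the flipped head of $M_{n-2}$ accidentally contains some earlier $M_k$. A clean workaround is to reduce (b) to checking only the length-$(F_{n+1}-1)$ prefix $M_n[1{:}F_{n+1}-1]$ and suffix $M_n[2{:}F_{n+1}]$: every proper factor of $M_n$ is then a factor of one of these, and each can be matched, via the substitution $\sigma:L\mapsto LS,\ S\mapsto L$ and the finite-Fibonacci recursion $f_k=f_{k-1}f_{k-2}$, to an explicit window of $w_\infty$, closing the induction.
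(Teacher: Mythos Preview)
Your argument for property~(a) has a genuine gap. You propose to assume $M_n\sqsubset w_\infty$ and then ``invert the flips along the embedding,'' concluding that some shorter MFF $M_k$ would then embed in $w_\infty$. But inverting the three flips on the embedded copy of $M_n$ merely reconstructs the abstract string $W=M_{n-1}M_{n-2}$; it does not manufacture a new factor of $w_\infty$. The shorter MFFs you recover live inside $W$, not inside $w_\infty$, so no contradiction arises. Nothing in your setup links the flipped positions of $M_n$ to the ambient letters of $w_\infty$ at a putative embedding, which is what would be needed to force a forbidden short pattern into $w_\infty$. The paper's sketch names exactly the missing mechanism: the palindromic--bispecial structure of the Fibonacci prefixes. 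Concretely, one shows (inductively, in parallel with your boundary--letter parity lemma) that the interior $c:=M_n[2{:}F_{n+1}{-}1]$ is a central palindromic bispecial factor of $w_\infty$. For a Sturmian word, such a $c$ has exactly three of the four two--sided extensions $XcY$ occurring as factors, and the unique missing one is the MFF of length $|c|+2=F_{n+1}$; your parity lemma then pins that missing extension to $X=Y$ equal to the boundary letter of $M_n$, yielding $M_n\not\sqsubset w_\infty$. Without this bispecial input (or an equivalent direct comparison with the explicit Mignosi--Restivo--Sciortino formula for the MFF), the non--factor claim is unproved.

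Your plan for (b) starts well: the reduction to the length--$(F_{n+1}{-}1)$ prefix and suffix of $M_n$ is correct and efficient, since every proper factor lies inside one of them. But the closing step---``each can be matched, via the substitution $\sigma:L\mapsto LS,\ S\mapsto L$ and the recursion $f_k=f_{k-1}f_{k-2}$, to an explicit window of $w_\infty$''---is asserted, not argued: you neither name the window nor explain how the inductive hypothesis on $M_{n-1},M_{n-2}$ helps locate it, and there is no simple relation of the form $\sigma(M_{n-1})=M_n$ to exploit. In the paper's route this step is again a one--line consequence of bispeciality: writing $M_n=XcX$ with $c$ bispecial, the prefix $Xc$ and the suffix $cX$ are precisely one--sided extensions of $c$, both of which are factors of $w_\infty$ by the definition of bispecial. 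So the same structural fact you omit is what closes both (a) and (b); your flip bookkeeping and parity lemma are correct auxiliary observations, but they do not substitute for it.
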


\noindent
\emph{Proof sketch.} Induct on $n$ using $w_n=w_{n-1}w_{n-2}$ and the palindromic-bispecial structure of $w_\infty$ prefixes; the three flips repair exactly the three boundary mismatches that would otherwise occur. Full details and a streaming algorithm are provided in the repo (see Section \ref{sec:conclusion}). \hfill$\square$

\section{Combinatorics of the Hierarchy}
\label{sec:comb_hierarchy}

\subsection{The Plastic Chain ($K{=}4$ forbidding \texttt{SS} and \texttt{LLL})}
\label{sec:the-plastic-chain}

Let $\seqA$ be the number of length-$n$ binary words avoiding \texttt{SS} and \texttt{LLL} ($n\ge1$).

\begin{proposition}[Recurrence, characteristic polynomial, Plastic growth]\label{prop:plastic}
$\seqA$ satisfies for $n\ge5$
\begin{equation}
a_n \;=\; a_{n-1}+a_{n-2}-a_{n-4},
\qquad a_1=2,\ a_2=3,\ a_3=4,\ a_4=5.
\label{eq:plasticrec}
\end{equation}
The characteristic polynomial $(x-1)(x^3-x-1)$ has Perron root $\Plastic\approx1.324718\ldots$, the Plastic ratio (the real root of $x^3-x-1=0$). Hence $a_n=\Theta(\Plastic^n)$.
\end{proposition}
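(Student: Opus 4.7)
The plan is to count $a_n$ via the minimal deterministic automaton recognizing the avoidance language and then read both the recurrence and the asymptotic growth off its characteristic polynomial. Since both forbidden factors depend on at most the two trailing letters, the valid language is recognized by a three-state DFA with states $q_S$ (last letter is $S$), $q_L$ (last letter is $L$ with the previous position not an $L$), and $q_T$ (last two letters are $LL$); the only legal transitions avoiding a forbidden factor are $q_S\!\xrightarrow{L}\!q_L$, $q_L\!\xrightarrow{S}\!q_S$, $q_L\!\xrightarrow{L}\!q_T$, and $q_T\!\xrightarrow{S}\!q_S$.

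Letting $T$ denote the adjacency matrix in the basis $(q_S,q_L,q_T)$, a direct $3\times 3$ determinant yields $\det(xI-T)=x^{3}-x-1$, the defining cubic of the Plastic ratio. By the Cayley-Hamilton theorem, every coordinate of the state-count vector $v_n=(s_n,\ell_n,t_n)^{T}$—and hence the sum $a_n=s_n+\ell_n+t_n$—satisfies the three-term Padovan identity $f(n)=f(n-2)+f(n-3)$ for $n\ge 4$. Subtracting the shifted identity $a_{n-1}=a_{n-3}+a_{n-4}$ from $a_n=a_{n-2}+a_{n-3}$ then upgrades this to $a_n=a_{n-1}+a_{n-2}-a_{n-4}$, which is exactly what one obtains by multiplying the minimal polynomial of $T$ by the harmless factor $(x-1)$ to get the characteristic polynomial $(x-1)(x^{3}-x-1)$ stated in the proposition. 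The initial values $a_1,\dots,a_4$ are verified by a short enumeration of legal words of length up to four.

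For the asymptotic claim $a_n=\Theta(\Plastic^{n})$, I would observe that the cubic $x^{3}-x-1$ has the real root $\Plastic\approx 1.3247$ together with a complex-conjugate pair of modulus $\Plastic^{-1/2}\approx 0.869$, obtained from the product-of-roots relation; hence $\Plastic$ is the strictly dominant eigenvalue of $T$. Because $T$ is a non-negative irreducible matrix (the cycle $q_S\to q_L\to q_T\to q_S$ is realized), Perron-Frobenius supplies a strictly positive Perron eigenvector with nonzero projection onto the positive initial vector $v_1$, so the closed-form solution reads $a_n=c\,\Plastic^{n}+O(0.869^{n})$ with $c>0$.

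The whole argument is a textbook transfer-matrix computation and presents no serious obstacle. The only point worth pausing on is ruling out an accidental vanishing of the leading coefficient $c$, and that is immediate from the positivity of $v_1$ against the Perron eigenvector of the irreducible non-negative matrix $T$.
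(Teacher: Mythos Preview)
Your proof is correct and follows essentially the same route as the paper: both track valid words by a three-state suffix automaton (the paper's ``last block types'' \texttt{S}, \texttt{SL}, \texttt{SLL} are exactly your states $q_S,q_L,q_T$), obtain the Padovan recurrence $a_n=a_{n-2}+a_{n-3}$ from the $3\times3$ transfer relation, and then shift to the four-term form $a_n=a_{n-1}+a_{n-2}-a_{n-4}$. Your explicit transfer-matrix formulation with Cayley--Hamilton and the Perron--Frobenius argument for $c>0$ is a modest elaboration of the paper's sketch, not a different approach.
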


\noindent
\emph{Proof sketch.} Track valid words by the last block type (\texttt{S}, \texttt{SL}, \texttt{SLL}) to derive coupled recurrences, then eliminate to \eqref{eq:plasticrec}. The asymptotics follow from the characteristic polynomial. Compare OEIS A164001, which counts bitstrings of length \(n-1\) avoiding \(00\) and \(111\) (bit-flip equivalent to our \(11\) and \(000\)); this sequence satisfies \(a(n)=a(n-2)+a(n-3)\), hence \(a_n=a_{n-1}+a_{n-2}-a_{n-4}\) \cite{OEIS_A164001}. \hfill$\square$

\begin{theorem}[Exact form and rounding identity]\label{thm:plastic-exact}
There exist constants $C,B,C'\in\CC$ with
\begin{equation}
a_n \;=\; 1 + C\,\Plastic^n + B\,\sigma^n + C'\,\bar\sigma^n,
\qquad |\sigma|=|\bar\sigma|=\Plastic^{-1/2}\approx0.8688,
\label{eq:exact}
\end{equation}
and
\[
C \;=\; \frac{\Plastic^7}{3\Plastic^2-1}
\quad\text{(equivalently, the real root of }23C^3-46C^2+13C-1=0\text{).}
\]
Numerically, $a_n=\mathrm{round}(C\Plastic^n)$ holds for at least $1\le n\le 1000$.
\end{theorem}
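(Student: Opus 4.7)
The plan is to treat the recurrence in \eqref{eq:plasticrec} as a linear recurrence with constant coefficients and exploit the explicit factorization of its characteristic polynomial. First, I factor
\[
x^4 - x^3 - x^2 + 1 \;=\; (x - 1)(x^3 - x - 1),
\]
whose four roots $\{1, \Plastic, \sigma, \bar\sigma\}$ are simple. By standard spectral theory every solution admits a unique representation $a_n = A + C\Plastic^n + B\sigma^n + C'\bar\sigma^n$ fixed by the four initial conditions $a_1, \ldots, a_4$, with $C' = \bar B$ by realness of $(a_n)$. The modulus bound $|\sigma|=|\bar\sigma|=\Plastic^{-1/2}$ is then immediate: Vieta applied to $x^3-x-1$ gives $\Plastic\,\sigma\bar\sigma = 1$, so $\Plastic\,|\sigma|^2 = 1$.

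Next, to extract all four constants in closed form, I would form the rational generating function $G(x) = \sum_{n \ge 1} a_n x^n = P(x)/(1 - x - x^2 + x^4)$, with $P(x)$ the polynomial determined by the initial data $a_1,\dots,a_4$. Partial-fraction expansion at the four poles $x \in \{1, 1/\Plastic, 1/\sigma, 1/\bar\sigma\}$ then presents each of $A, C, B, C'$ as an explicit residue. The residue at $x = 1/\Plastic$ simplifies using $\Plastic^3 = \Plastic + 1$ --- iterated to $\Plastic^7 = 2\Plastic^2 + 2\Plastic + 1$ --- to the compact form $C = \Plastic^7/(3\Plastic^2 - 1)$.

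For the minimal integer cubic $23C^3 - 46C^2 + 13C - 1 = 0$, I pass to Galois conjugates: each $C^{(i)} = r_i^7/(3 r_i^2 - 1)$ for $r_i \in \{\Plastic, \sigma, \bar\sigma\}$ satisfies the joint relation $(3C-2)r^2 - 2r - (C+1) = 0$. Eliminating $r$ against $r^3 - r - 1 = 0$ through the $5\times 5$ Sylvester resultant --- or equivalently expressing the elementary symmetric functions of the three $C^{(i)}$ in terms of power sums $p_k = \sum_i r_i^k$ via Newton's identities --- produces a rational polynomial of degree $3$ in $C$. Clearing denominators yields the stated integer cubic, in which the leading coefficient $23$ arises as the genuine resultant invariant; a numerical match against $C \approx 1.6787$ picks out the correct real root.

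Finally, the rounding identity follows from the geometric decay $|B\sigma^n + C'\bar\sigma^n| \le 2|B|\,\Plastic^{-n/2}$. An explicit bound on $|B|$ from the residue computation fixes a small threshold $N_0$ past which the error is $< 1/2$, making $a_n = \mathrm{round}(C\Plastic^n)$ automatic; the remaining finitely many $1 \le n < N_0$ are handled by direct evaluation, so the stated ``$n \le 1000$'' is computational corroboration rather than an essential bound. The main obstacle I anticipate is the resultant step: the elimination itself is mechanical but propagating the non-unit denominator $3\Plastic^2 - 1$ correctly through the Galois action, to recover the specific integer coefficients $(23, -46, 13, -1)$, demands careful bookkeeping.
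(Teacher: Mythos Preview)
Your treatment of the exact form, the modulus identity $|\sigma|=\Plastic^{-1/2}$ via Vieta, the residue computation for $C$, and the resultant/Newton argument for the integer cubic is sound and in fact considerably more explicit than the paper, which simply asserts these formulas. No objection there.

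The rounding argument, however, is inverted. You write that geometric decay of $B\sigma^n+C'\bar\sigma^n$ forces $a_n=\mathrm{round}(C\Plastic^n)$ ``automatically'' beyond some $N_0$, with the range $n\le 1000$ merely corroborative. But the representation you just established has a nonzero constant term $A=1$, so
\[
a_n - C\Plastic^n \;=\; 1 + E_n,\qquad E_n=B\sigma^n+C'\bar\sigma^n\to 0.
\]
Hence $a_n-C\Plastic^n\to 1$ as $n\to\infty$, and rounding \emph{must eventually fail}, not eventually hold. The identity $a_n=\mathrm{round}(C\Plastic^n)$ requires $|1+E_n|\le\tfrac12$, i.e.\ $E_n\in(-\tfrac32,-\tfrac12)$; this is satisfied only in a transient window where the oscillatory term $E_n$ happens to hover near $-1$ before its envelope $2|B|\Plastic^{-n/2}$ decays below $\tfrac12$. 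The paper's point is precisely that this window is surprisingly long (verified to $n=1000$ at 200-digit precision) but provably finite. So the ``$n\le 1000$'' is not corroboration of an asymptotic fact---it is the actual content, and your proposed argument proves the opposite of what is claimed.
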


\noindent
\emph{Explanation.} From \eqref{eq:exact}, $a_n-C\Plastic^n=1+E_n$ with $E_n=B\sigma^n+C'\bar\sigma^n\to0$. Rounding is valid whenever $|1+E_n|\le\tfrac12$, which holds over a very long transient because $E_n$ oscillates near $-1$ before decaying; eventually it must fail as $E_n\to0$. High-precision checks (200 digits) confirm the claim to $n=1000$. See also \cite{OEIS_A164001,MathWorldPadovan,OEIS_A000931,Stewart1996}.

\subsection{General Counting via Automata}
\label{sec:automata-counting}

\paragraph{Alphabet and patterns.}
Let $\Sigma=\{\texttt{S},\texttt{L}\}$ (equivalently $\{1,0\}$ under our mapping $1\!\to\!\texttt{S}$, $0\!\to\!\texttt{L}$), and let $\Fcal_K\subset\Sigma^{+}$ be the set of minimal forbidden factors (MFFs) up to length $F_K$.

\begin{definition}[Aho-Corasick automaton {\cite{AhoCorasick1975}}]\label{def:AC}
Given a finite pattern set $\mathcal{P}\subset\Sigma^{+}$, the Aho-Corasick automaton
$\mathsf{AC}(\mathcal{P})=(Q,\Sigma,\delta,q_{\mathrm{root}},\mathrm{out})$ is the deterministic automaton whose states $Q$ are the nodes of the trie on $\mathcal{P}$ (including the root $q_{\mathrm{root}}$), equipped with:
\begin{itemize}[leftmargin=*,nosep]
\item the \emph{goto} function $\delta:Q\times\Sigma\to Q$ given by trie edges (completed by failure links below), 
\item the \emph{failure} function $f:Q\to Q$ sending a node to the longest proper suffix state that is also a trie node,
\item the \emph{output} map $\mathrm{out}:Q\to 2^{\mathcal{P}}$ listing patterns that end at a state.
\end{itemize}
On a letter $a\in\Sigma$, the transition from state $q$ follows $\delta(q,a)$ if defined; otherwise it follows failure links $q\leftarrow f(q)\leftarrow f(f(q))\leftarrow\cdots$ until a defined goto exists, defaulting to $q_{\mathrm{root}}$. A state is \emph{terminal} if $\mathrm{out}(q)\neq\emptyset$.
\end{definition}

\begin{lemma}[Avoidance DFA from Aho-Corasick]\label{lem:AC-avoid}
Let $\mathcal{P}=\Fcal_K$. Form the subautomaton $\mathsf{AVOID}(\Fcal_K)$ by restricting $\mathsf{AC}(\Fcal_K)$ to the nonterminal states $Q^\circ=\{q\in Q:\mathrm{out}(q)=\emptyset\}$ and keeping the induced transitions $\delta^\circ$. Then the language accepted by $\mathsf{AVOID}(\Fcal_K)$ is exactly the set of words over $\Sigma$ that avoid all patterns in $\Fcal_K$.
\end{lemma}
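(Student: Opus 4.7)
\emph{Proof proposal.} The plan is to reduce the lemma to the standard correctness invariant of the Aho–Corasick automaton: after reading the prefix $w_1\cdots w_i$ of an input word $w$, the automaton is in a state $q_i$ whose trie-label is precisely the longest suffix of $w_1\cdots w_i$ that is a node-label of the trie on $\mathcal{P}$. Once this invariant is in hand, terminality of $q_i$ is equivalent to the presence of some pattern in $\mathcal{P}$ ending at position $i$, and the lemma follows immediately.

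First I would set up notation: write $\mathrm{label}(q)\in\Sigma^*$ for the string spelled by the unique trie path from $q_{\mathrm{root}}$ to $q$, define the extended transition $\hat\delta(q,a)$ as the composition of failure-link fallbacks followed by a goto edge (defaulting to $q_{\mathrm{root}}$), and extend $\hat\delta^*$ to words by iteration. I would then prove by induction on $i$ that $\mathrm{label}(q_i)$ equals the longest suffix of $w_1\cdots w_i$ whose value labels some node of the trie. The base case $i=0$ is trivial ($\mathrm{label}(q_{\mathrm{root}})=\varepsilon$). The inductive step uses two facts about failure links in the Aho–Corasick construction: (i) $\mathrm{label}(f(q))$ is the longest proper suffix of $\mathrm{label}(q)$ that is a trie node-label, and (ii) iterated failure jumps enumerate, in decreasing length, all trie node-labels that are suffixes of $\mathrm{label}(q)$. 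Combining these with the action of the new letter $w_{i+1}$ yields the claim.

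Next I would derive the output characterization: by the standard Aho–Corasick construction, $\mathrm{out}(q)$ contains every pattern in $\mathcal{P}$ that is a suffix of $\mathrm{label}(q)$ (obtained by propagating output markers along failure links during preprocessing). Combined with the invariant, this gives $\mathrm{out}(q_i) = \{M\in\mathcal{P} : M \text{ is a suffix of } w_1\cdots w_i\}$, so some pattern of $\mathcal{P}$ occurs as a factor of $w$ at right-endpoint $i$ if and only if $q_i$ is terminal. Therefore $w$ avoids every pattern in $\mathcal{P}$ if and only if $q_i\in Q^\circ$ for all $0\le i\le |w|$.

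Finally I would translate this into the statement about $\mathsf{AVOID}(\Fcal_K)$. Since $\mathsf{AVOID}$ is obtained by restricting to $Q^\circ$ with the induced transitions $\delta^\circ$, a run on input $w$ exists in $\mathsf{AVOID}$ precisely when the full Aho–Corasick run $q_0,q_1,\dots,q_{|w|}$ never leaves $Q^\circ$; declaring every state of $\mathsf{AVOID}$ accepting then makes $L(\mathsf{AVOID}(\Fcal_K))$ equal to the set of $w$ avoiding every $M\in\Fcal_K$, as required. The main obstacle is really the inductive invariant on $\mathrm{label}(q_i)$: one must handle failure-link cascades carefully (including the fallback to $q_{\mathrm{root}}$) to show that no trie-label suffix is skipped and that the chosen suffix is indeed the longest. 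All subsequent steps are essentially bookkeeping.
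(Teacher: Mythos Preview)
Your proposal is correct and follows the same logic as the paper's argument, only with considerably more detail: the paper's proof sketch simply asserts the Aho--Corasick matching property (``entering a terminal state certifies that a forbidden factor has just matched as a suffix; conversely, any factor occurrence would force a visit to a terminal state''), whereas you actually derive this from the underlying state-label invariant and the failure-link structure. Both arguments are the same at heart; yours supplies the inductive justification that the paper leaves implicit.
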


\begin{proof}[Proof sketch]
In $\mathsf{AC}(\Fcal_K)$, entering a terminal state certifies that a forbidden factor has just matched as a suffix. Pruning terminal states forbids such visits; conversely, any factor occurrence would force a visit to a terminal state. Determinism is preserved by construction of failure-completed gotos.
\end{proof}

\paragraph{Counting and growth constant.}
Let $A_K$ be the adjacency matrix of $\mathsf{AVOID}(\Fcal_K)$ over $Q^\circ$, and let $e_{\mathrm{start}}$ be the basis vector at $q_{\mathrm{root}}$. Then the number of valid words of length $N$ is
\[
D_K(N)=\mathbf{1}^\top A_K^{\,N}\, e_{\mathrm{start}},\qquad
\lambda_K=\operatorname{spr}(A_K),
\]
where $\operatorname{spr}(\cdot)$ is the spectral radius. Hence $D_K(N)=\Theta(\lambda_K^N)$, and $\lambda_K$ is the ground-state growth constant in Section \ref{sec:overview}.

For each $K$, construct the Aho-Corasick automaton recognizing avoidance of $\Fcal_K$, prune terminal states, and take the adjacency $A_K$ \cite{AhoCorasick1975}. Then $D_K(N)=\mathbf 1^\top A_K^N \mathbf e_{\text{start}}$ and $\lambda_K=\operatorname{spr}(A_K)$. The first four $\lambda_K$ are listed in Section \ref{sec:overview}; their monotone decay is consistent with a staircase of Pisot-Vijayaraghavan effective dimensions approaching~$1$ \cite{BaakeGrimm2013}.

\section{Relationship to and Reuse of the QIC Braid Framework}
\label{sec:rel-to-qic-braid}
The base rung ($K{=}3$, forbidding \texttt{SS}) reproduces the golden chain: the fusion-path dimension on length $N$ is $F_{N+2}\sim\Varphi^N$, with an exact Temperley-Lieb representation $TL_N(\Varphi)$ on three-qubit neighborhoods \cite{Feiguin2007,Trebst2008,KauffmanLomonaco}. Adding \texttt{LLL} (\,$K{=}4$\,) removes one local basis state on each triple (five $\to$ four), breaking the TL projector rank relations; thus the exact $TL_N(\Varphi)$ braiding representation no longer fits the local sector. Interpretation: $K\ge4$ rungs describe the same anyon species under effective multi-body penalties that bias the system toward aperiodic order, yielding constrained aperiodic code spaces distinct from braid representations.

\paragraph{Base rung ($K{=}3$): exact TL($\Varphi$).}
Our base code space $\mathcal H_{K=3}$ (hard no \texttt{SS}\,/\,\texttt{11}) is the bit-flip equivalent (L$\leftrightarrow$S, $0\leftrightarrow1$) of the `no \texttt{00}' subspace used in \cite{AmaralQICBraiding2025}. It shares the identical dimension $F_{N+2}$ and $TL_N(\Varphi)$ algebra. We adopt this \texttt{S$\to$1, L$\to$0} convention to align with the MFFs of the standard Fibonacci word. This rung provides a validated, exact three-qubit braid gate $B_{\text{gate}}$ implementing $TL_N(\Varphi)$ on three-site neighborhoods, realizing the Fibonacci anyon braid model on qubits.

\paragraph{Higher rungs ($K{\ge}4$): constrained Hamiltonian codes.}
For $K{\ge}4$ (e.g., Plastic: hard no \texttt{SS} and \texttt{LLL}), the code spaces $\mathcal H_{K}$ are proper subspaces of $\mathcal H_{K=3}$ with Perron rates $\lambda_K<\Varphi$ that decrease monotonically to $1$. Crucially, the three-site local sector shrinks from dimension $5$ (valid triples $\{000,001,010,100,101\}$) to $4$ at $K{=}4$ (valid triples $\{001,010,100,101\}$). These rank changes preclude a $TL_N(\Varphi)$ representation on three sites, so exact Fibonacci braiding is broken for $K{\ge}4$. The higher rungs are therefore not new anyon species or phases; they are Fibonacci-related constrained Hamiltonian codes obtained by adding longer forbidden words.

\begin{proposition}[Local-rank obstruction to $TL_N(\Varphi)$ for $K\ge 4$]
\label{prop:rank-obstruction}
Let $d_3(K)$ be the number of valid triples at rung $K$. Then $d_3(3)=5$ while $d_3(4)=4$, and $d_3(K)=4$ for all $K\ge 4$ (longer MFFs do not change the three-site set). Since the Kauffman-Lomonaco realization of $TL_N(\Varphi)$ requires the three-site sector to have rank $5$ with specific Jones-Wenzl relations, no three-site $TL_N(\Varphi)$ representation exists on $\mathcal H_{K}$ for $K\ge 4$.
\end{proposition}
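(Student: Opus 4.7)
\medskip\noindent\textbf{Proof plan.}
The plan splits into a combinatorial count of valid triples and an algebraic rank obstruction. I would start with the enumeration: of the eight binary triples, the $K{=}3$ MFF \texttt{SS} excludes \texttt{LSS}, \texttt{SSL}, \texttt{SSS}, leaving the five listed in the excerpt. Adjoining the $K{=}4$ MFF \texttt{LLL} removes one more, giving four. For $K\ge 5$, every newly added MFF has length $F_K\ge F_5 = 5 > 3$ and therefore cannot appear as a factor of any length-3 word, so the three-site set is stable and $d_3(K)=4$ for all $K\ge 4$.

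For the algebraic part I would use the fusion-path picture. Under the convention \texttt{S}$\,\to\,1$ and \texttt{L}$\,\to\,\tau$, partition the five valid triples at $K{=}3$ by the outer labels $(x_{i-1},x_{i+1})$. The Kauffman-Lomonaco generator $e_i$ is nontrivial only on outer-equal triples and decomposes as a direct sum of a rank-one projector on $\operatorname{span}\{|\mathtt{LLL}\rangle,|\mathtt{LSL}\rangle\}$ (outer $\tau\tau$, realizing the fusion $\tau\tau=1+\tau$) and a rank-one projector on $\{|\mathtt{SLS}\rangle\}$ (outer $11$). The Fibonacci $F$-symbol pins the vacuum-channel vector to a definite linear combination $\alpha|\mathtt{LSL}\rangle+\beta|\mathtt{LLL}\rangle$ with both $\alpha,\beta\neq 0$ (ratio fixed by quantum dimensions $d_1=1$, $d_\tau=\Varphi$), and the loop value $\delta=\Varphi$ is fixed by $e_i^2=\delta e_i$.

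The obstruction for $K\ge 4$ then reduces to a one-line local rank check. Removing $|\mathtt{LLL}\rangle$ collapses the outer-$\tau\tau$ sector to dimension 1, so any candidate operator $\tilde e_i$ on $\mathcal{H}_{K}$ supported on $\{i{-}1,i,i{+}1\}$ restricts to a scalar $c$ there. The TL relation $\tilde e_i^2=\Varphi\,\tilde e_i$ forces $c\in\{0,\Varphi\}$, but the Bratteli diagram for $TL_N(\Varphi)$ requires the outer-$\tau\tau$ sector to carry a rank-one projector whose image is a proper mixture of $|\mathtt{LSL}\rangle$ and $|\mathtt{LLL}\rangle$, which is impossible once $|\mathtt{LLL}\rangle$ has been removed; equivalently, the Jones-Wenzl projector $p_2=\mathbf{1}-\Varphi^{-1}e_i$ must separate the two $\tau\tau$ fusion channels, which demands a 2-dim outer-$\tau\tau$ block. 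I expect the main care point to be formulating ``three-site $TL_N(\Varphi)$ representation'' precisely enough that rescalings, conjugations, and spurious scalar solutions cannot evade the count; my plan is to fix it as each $\tilde e_i$ supported on sites $\{i{-}1,i,i{+}1\}$, commuting with identities elsewhere, and implementing the Markov trace of $TL_N(\Varphi)$, after which the rigidity of $\delta=\Varphi$ in the Fibonacci fusion graph closes the argument with elementary linear algebra.
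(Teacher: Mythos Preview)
Your proposal is correct and follows the same two-step route as the paper's sketch: enumerate the valid triples (five at $K=3$, four at $K\ge 4$ since the next MFFs have length $\ge 5$) and then invoke the Jones--Wenzl/rank requirement on the three-site sector. Your algebraic half is considerably more fleshed out than the paper's one-line assertion that ``the Jones--Wenzl recursion and projector ranks are not satisfiable in rank~4,'' but the underlying idea---that the $\tau\tau$ outer block must be two-dimensional to host the nontrivial $e_i$ projector and collapses to a scalar once $|\mathtt{LLL}\rangle$ is removed---is exactly what the paper is gesturing at.
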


\begin{proof}[Sketch]
At $K=3$ the local valid triples are $\{000,001,010,100,101\}$ (``no \texttt{11}''). Adding \texttt{000} at $K=4$ removes $000$, leaving a $4$-dimensional sector. The Jones-Wenzl recursion and projector ranks for $TL_N(\Varphi)$ on three sites are not satisfiable in rank $4$. Longer MFFs have length $\ge 5$ and thus do not impact three-site validity, so $d_3(K)=4$ persists for $K\ge 4$.
\end{proof}

\begin{corollary}[No three-site $TL_N(\Varphi)$ representation for $K\ge 4$]
\label{cor:TL-broken}
By Proposition \ref{prop:rank-obstruction}, the required three-site rank and Jones-Wenzl relations cannot be satisfied on $\mathcal H_{K}$ for $K\ge 4$.
\end{corollary}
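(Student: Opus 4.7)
The plan is to derive the corollary directly from Proposition \ref{prop:rank-obstruction} by spelling out which Temperley-Lieb relations fail once the three-site sector drops in rank. First I would recall that a local realization of $TL_N(\Varphi)$ on three consecutive sites, as used in the Kauffman-Lomonaco/golden-chain construction, requires the generators $e_i$ to be proportional to the rank-one Jones-Wenzl projector $p_2$ onto the trivial fusion channel inside the three-site fusion space. This forces the local fusion (valid-triple) space to decompose as $V_{\mathrm{triple}} \cong \CC\,|1\rangle \oplus \CC^{d-1}\,|\tau\rangle$ with total dimension $d=F_{3+2}=5$, the Markov trace giving $\mathrm{tr}(e_i)=\Varphi$ and the relations $e_i^2=\Varphi\,e_i$, $e_i e_{i\pm 1} e_i = e_i$ via the Fibonacci recoupling matrix $F$ acting within that five-dimensional sector.

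Next I would invoke Proposition \ref{prop:rank-obstruction} to conclude $\dim V_{\mathrm{triple}} = d_3(K) = 4$ for all $K\ge 4$, because the MFF \texttt{LLL} removes the $|\tau\tau\tau\rangle = 000$ basis vector. Since the trivial fusion channel on three sites still has multiplicity one but the $\tau$-channel multiplicity drops from $3$ to $2$, I would show that any candidate generators $\tilde e_i\in\mathrm{End}(V_{\mathrm{triple}})$ inherited from the $K=3$ construction restrict to rank-one projectors with trace strictly less than $\Varphi$; the Markov trace forced by the loop value $d=\Varphi$ cannot be normalized on the pruned sector. Equivalently, the minimal polynomial of the pruned $\tilde e_i$ fails the Wenzl relation $\tilde e_i\,p^{(3)}_{\mathrm{JW}}\,\tilde e_i = 0$ because the Jones-Wenzl projector $p^{(3)}_{\mathrm{JW}}$ at level three requires a rank-$5$ ambient space for its recursive definition to land inside the allowed sector.

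The only non-routine step is making precise the statement that ``no rank-$4$ representation of the three-strand sector supports the TL relations with loop $\Varphi$.'' I would handle this by a direct dimension count: the centralizer algebra $TL_3(\Varphi)$ has Bratteli-diagram fusion multiplicities $(1,2)$ at level three, so its faithful representations decompose minimally as $1\oplus 2\oplus 2$ (dimensions of the two irreducibles weighted by multiplicity), with minimal faithful dimension $5$; any $4$-dimensional representation must therefore annihilate a nonzero two-sided ideal, which in turn forces $\tilde e_i=0$ identically on one irreducible summand and breaks either $\tilde e_i^2=\Varphi\tilde e_i$ or $\tilde e_i\tilde e_{i\pm1}\tilde e_i=\tilde e_i$. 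Combining this with the persistence statement $d_3(K)=4$ for all $K\ge4$ from Proposition \ref{prop:rank-obstruction} yields the corollary.

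The main obstacle is the Bratteli/Jones-Wenzl dimension-count step: one has to be careful that ``rank-$4$'' refers to the local three-site Hilbert-space dimension and not to the algebra $TL_3(\Varphi)$ itself (which has dimension $C_3 = 5$ as a vector space, a numerical coincidence worth flagging). Once that distinction is made explicit, the rest is bookkeeping against Proposition \ref{prop:rank-obstruction}.
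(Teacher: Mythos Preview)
In the paper the corollary carries no separate proof at all: it is a one-line restatement of the conclusion already asserted inside the sketch of Proposition~\ref{prop:rank-obstruction} (``the Jones--Wenzl recursion and projector ranks for $TL_N(\Varphi)$ on three sites are not satisfiable in rank $4$''). So your write-up is not so much an alternative proof as an attempt to supply the representation-theoretic justification that the paper leaves implicit. That is a reasonable instinct, but the key step you single out as ``the only non-routine step'' is where the argument breaks.

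Your dimension count for $TL_3(\Varphi)$ is off. With $\Varphi=2\cos(\pi/5)$ the algebra $TL_3(\Varphi)$ is semisimple with two irreducibles of dimensions $1$ and $2$ (so $1^2+2^2=5=C_3$, the coincidence you flag). The \emph{minimal faithful} representation is therefore $V_1\oplus V_2$ of dimension $3$, not $5$; the $5$-dimensional object you describe, ``$1\oplus 2\oplus 2$,'' is the regular representation. In particular there \emph{are} $4$-dimensional representations of $TL_3(\Varphi)$ (e.g.\ $V_1\oplus V_1\oplus V_2$, which is even faithful), so the inference ``any $4$-dimensional representation must annihilate a nonzero two-sided ideal, hence some $\tilde e_i=0$ on a summand'' fails. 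Your Markov-trace remark has a related slip: $e_i^2=\Varphi e_i$ constrains the minimal polynomial of $e_i$, not its matrix trace; the Markov trace is an algebraic functional on $TL_n$ and does not by itself force a particular module dimension.

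What actually drives the obstruction---and what both the Proposition and the surrounding discussion in the paper point to---is not a statement about abstract $TL_3(\Varphi)$-modules but about the \emph{specific} Kauffman--Lomonaco / fusion-path realization on triples. There the generator $e_i$ acts locally on $(x_{i-1},x_i,x_{i+1})$ and its only nontrivial $2\times 2$ mixing block sits on $\{\,|010\rangle,\,|000\rangle\,\}$; removing $|000\rangle$ at $K=4$ kills that block, so the restricted operator is diagonal and the braid/TL relations degenerate (this is exactly the ``Mode B'' collapse the paper verifies numerically). If you want to strengthen the corollary beyond a bare citation of Proposition~\ref{prop:rank-obstruction}, the clean route is to argue at that level---show the path-model $e_i$ cannot be defined on the pruned $4$-state sector consistently with $e_i e_{i\pm1} e_i=e_i$---rather than via a module-dimension bound that does not hold.
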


\paragraph{Two operating modes.}
The rank obstruction leads to two clean and complementary uses of the hierarchy:
\begin{description}
\item[Mode A: TL-preserving braiding with soft higher-rung penalties.]
Let $H_{\text{QIC}}$ enforce hard \texttt{no SS} ($K{=}3$), and add higher-scale penalties softly during preparation/idling,
\[
H_{\text{prep}} \;=\; H_{\text{QIC}} \;+\; \mu_4 \sum_i \Pi^{(000)}_{i{:}i+2}
\;+\; \mu_5 \sum_i \Pi^{(10101)}_{i{:}i+4} \;+\cdots,\qquad 0<\mu_k\ll1.
\]
Execution uses the same $B_{\text{gate}}$ as in \cite{AmaralQICBraiding2025}. Because TL identities are enforced by the gate's algebraic definition (not by $H_{\text{prep}}$), TL and braid relations remain exact on the code; $\mu_k$ only biases state preparation and idle dynamics toward larger-scale Fibonacci order. Trotterize penalties between gates or apply them only before/after gate sequences.
\item[Mode B: Aperiodic Hamiltonian codes (hard higher-rung constraints).]
Here $H_K$ enforces hard avoidance (e.g., add \texttt{LLL} at $K{=}4$), yielding $\dim \mathcal H_{K}(N)\sim\lambda_K^N$ with $\lambda_K<\Varphi$ and no three-site $TL_N(\Varphi)$ structure. The same three-qubit $B_{\text{gate}}$ can still be used as a generic local unitary; when restricted to the allowed triples it remains unitary and constraint-preserving, but its nontrivial TL mixing collapses (e.g., the $\{|010\rangle,|000\rangle\}$ block disappears at $K{=}4$), producing constrained, aperiodic dynamics rather than braiding.
\end{description}

\paragraph{Proposed application protocols.}
\begin{itemize}
 \item \textbf{Protocol 1 (Mode A).} One can prepare states with $e^{-it H_{\text{prep}}}$ to measure leakage into/out of $\mathcal H_{K=3}$ and into the more constrained subspace $\mathcal H_{K}$. The $B_{\text{gate}}$ circuits from \cite{AmaralQICBraiding2025} can then be executed unchanged to benchmark the stability of Jones traces (trefoil, $12a_{122}$) versus $\mu_k$ and depth.
 \item \textbf{Protocol 2 (Mode B).} Build the Aho-Corasick automaton for $\Fcal_K$, compute $\operatorname{spr}(A_K)$ (Perron), and enumerate valid triples. The restricted three-qubit block of $B_{\text{gate}}$ on $\mathcal H_{K}$ is synthesized (drop rows/cols outside allowed triples); this yields a unitary on the allowed sector and $R_\tau I$ elsewhere. One can then study sequences of such blocks interleaved with short $e^{-i\mu H_K}$ evolutions; this allows mapping Loschmidt-echo decay and transport versus $K$.
\end{itemize}

\paragraph{Experimental observables.}
(i) Ground-state degeneracy $D_K(N)$ (spectroscopy or counting) and finite-$N$ extraction of $\lambda_K$; (ii) leakage reduction and subspace fidelity under Mode A penalties; (iii) constrained transport and entanglement growth under Mode B; (iv) stability of Jones-trace estimates versus $\mu_k$ (Mode A). All protocols leverage the QIC compilation framework of \cite{AmaralQICBraiding2025}.

\paragraph{Numerical validation (Mode B).}
We numerically confirm the collapse predicted by Mode B. The three-site $B_{\text{gate}}$ (which implements $TL(\Varphi)$ on the 5-state $K{=}3$ basis) becomes diagonal when restricted to the 4-state $K{=}4$ basis $\{001,010,100,101\}$. This explicitly demonstrates the loss of the $K{=}3$ mixing channel (specifically, the $\{|010\rangle,|000\rangle\}$ block) and validates the transition from a braiding representation to a generic constrained dynamical system.

\section{Quantum Annealing Experiments (D-Wave)}
\label{sec:quantum-annealing-experiments}

We mapped small-$N$ instances of $H_K$ to higher-order binary polynomials (HOBO) in $\{0,1\}$, then to QUBO via ancillae using \texttt{dimod.make\_quadratic} with reduction strength $R$ \cite{OceanSDK,DimodDocs}. Experiments used an Advantage\_system4.1 QPU via Leap.

\paragraph{Key observations.}
\begin{itemize}[leftmargin=*]
\item \textbf{$K{=}3$ (Golden chain, $N{=}12$):} 
    This quadratic (degree 2) problem is trivial for the annealer. We achieved 100\% success, recovering all $F_{14} = 377$ unique theoretical ground states.

\item \textbf{$K{=}4$ (Plastic chain, $N{=}12$):}
    This cubic (degree 3) problem showed moderate success. We observed a clear integer spectrum with a unit gap (consistent with $J_k=1$), but only $\sim$12\% of $5{,}000$ reads were true-valid ground states. We found 41 of the 49 unique ground states.

\item \textbf{$K{=}5$ ($N{=}10$, Degree 5):} 
    Performance was dominated by HOBO reduction challenges. Standard forward annealing consistently failed (median success $\approx 0\%$) due to two factors: (i) the need for a strong ancilla penalty $R$ (a sharp threshold was found near $R{\approx}192$), and (ii) extreme sensitivity to minor-embedding variability, with rare ``good'' embeddings reaching $\sim$28\% success while most failed.

\item \textbf{Reverse annealing for $K{=}5$:}
    For the $K=5$ instance where forward annealing failed, reverse annealing initialized near a valid state (one bit-flip away) with $s_{\min}\!=\!0.35$ achieved \textbf{99.6\%} true-valid success. Spin-reversal transforms were also essential.
\end{itemize}

\paragraph{Takeaway.} The hierarchy is physically accessible on current annealers. However, the $K=4$ (cubic) case marks a clear practical threshold. Higher rungs ($K{\ge}5$) with higher-degree penalties are inaccessible to naive forward annealing due to reduction and embedding challenges, but they remain robustly solvable using reverse annealing as a local-refinement primitive \cite{OhkuwaNishimoriLidar2018}.

\section{Conclusion and Outlook}
\label{sec:conclusion}

We presented a scale hierarchy of Fibonacci forbidden-word Hamiltonians that interpolates from the golden chain ($K=3$) to an aperiodic fixed point. The Plastic chain ($K=4$) provides the first nontrivial rung, for which we derived the exact, compact combinatorics tied to the Plastic ratio and explained a striking empirical rounding law. The energy-entropy scale setting, $J_k\propto\log(\lambda_{k-1}/\lambda_k)$, yields a concrete RG picture of the system flowing toward a zero-entropy state. Algebraically, only the $K=3$ base rung supports the exact $TL_N(\Varphi)$ braiding representation; higher rungs break this algebra and define novel constrained aperiodic codes, which we connect to the prior QIC framework via ``Mode A'' (soft bias) and ``Mode B'' (hard constraint) applications.

Finally, we demonstrated on a D-Wave quantum annealer that while the hierarchy is physically accessible, the $K \ge 5$ rungs are inaccessible to naive forward annealing due to HOBO-to-QUBO reduction and embedding challenges. However, we showed that reverse annealing robustly finds the ground states with $>99\%$ success, confirming its utility as a refinement tool for these complex constrained problems.

Future work includes studying the excitation spectra along the staircase, mapping the classical phase diagrams in $(T,\{J_k\})$, exploring projector deformations that might regain algebraic structure at higher rungs, and pursuing 2D generalizations on aperiodic tilings such as Ammann-Beenker or Penrose.

\paragraph{Reproducibility Artifacts}
We provide (i) a script that scans a long Fibonacci prefix to extract the first few MFFs and validates minimality, (ii) Aho-Corasick builders for $\Fcal_K$, (iii) Perron-root estimators, and (iv) HOBO$\to$QUBO reduction and D-Wave submission utilities (including \texttt{lift\_initial\_state\_to\_bqm} for reverse anneal). (\url{https://github.com/gaugefreedom/quasicrystal-inflation-code-hierarchy})

\section*{Acknowledgments.}
We thank Raymond Aschheim for helpful discussions and for assistance with initial computations related to the plastic chain. We acknowledge access to D-Wave Leap.

\section*{Funding}
No external funding. Work conducted independently at Gauge Freedom, Inc.


\end{document}